\documentclass[runningheads]{llncs}
\begin{document}

\newtheorem{sentence}{Sentence}
\newtheorem{follow}{Corollary}
\newtheorem{consequence}{Consequence}
\newcommand{\bP}{{\mathbf P}}
\newcommand{\bQ}{{\mathbf Q}}
\newcommand{\bR}{{\mathbf R}}
\newcommand{\bL}{{\mathbf L}}
\newcommand{\bM}{{\mathbf M}}
\newcommand{\bU}{{\mathbf U}}
\newcommand{\bA}{{\mathbf A}}
\newcommand{\bB}{{\mathbf B}}
\newcommand{\bC}{{\mathbf C}}
\newcommand{\bD}{{\mathbf D}}
\newcommand{\Dmtr}[4]%
{\left|\begin{array}{rr}#1 & #2\\#3 & #4\end{array}\right|}
\def\cA{{\bf \mathcal A}}
\def\cC{{\bf \mathcal C}}
\def\cD{{\bf \mathcal D}}
\newcommand{\LDU}{{\mathrm{LDU}}}
\newcommand{\Adj}{{\mathrm{Adj}}}
\newcommand{\diag}{{\mathrm{diag}}}
\newcommand{\rank}{{\mathrm{rank}}}
\def\al{{\alpha}}
\def\be{\beta}
\def\de{\delta}
\def\ve{\varepsilon}
\def\si{\sigma}
\def\ga{\gamma}
\def\di{{\bf {\rm diag}}}
\def\De{\Delta}
\def\G{\Gamma}
\def\cA{{\bf \mathcal A}}
\def\cU{{\bf \mathcal U}}
\def\cL{{\bf \mathcal L}}
\def\cH{{\bf \mathcal H}}
\def\cG{{\bf \mathcal G}}

\title{Generalized Bruhat decomposition in commutative domains}
\titlerunning{Generalized Bruhat decomposition in commutative domains} 
\author{Gennadi Malaschonok
   \thanks{Preprint of the paper: G.I.Malaschonok. Generalized Bruhat decomposition in commutative domains / in book: 
   Computer Algebra in Scientific Computing. CASC'2013. LNCS 8136, Springer, Heidelberg, 2013, pp.231-242.
   Supported by part Russian Foundation for Basic Research No. 12-07-00755a}}
\institute{Tambov State University, \\  Internatsionalnaya 33,
392622 Tambov, Russia \\
\email{malaschonok@gmail.com}}

\maketitle

\begin{abstract}
Deterministic recursive algorithms for the computation of generalized  Bruhat decomposition
of the matrix in commutative domain are presented.
This method has the same complexity as the algorithm of matrix multiplication.
\end{abstract}


\section{Introduction}

A matrix decomposition of a form  $A=VwU$ is called the Bruhat decomposition of the matrix $A$, if
$V$ and $U$ are nonsingular upper triangular matrices and $w$ is a
matrix of permutation. It is usually assumed that the matrix $A$ is defined in a certain field.
 Bruhat decomposition plays  an important
role in the theory of algebraic groups. The generalized Bruhat decomposition was
introduced and developed by D.Grigoriev\cite{10},\cite{11}.  

In \cite{12} there was constructed a pivot-free
matrix decomposition method in a common case of singular matrices
over a field of arbitrary characteristic.  This algorithm has the same complexity as 
matrix multiplication and does not require pivoting. For singular
matrices it allows us to obtain a nonsingular block of the biggest
size.

Let $R$ be a commutative domain, $F$ be the field of fractions over $R$.
We want to obtain a decomposition of matrix $A$ over domain $R$ in the form  $A=VwU$, where $V$ and $U$ are
upper triangular matrices over $R$ and $w$ is a matrix of permutation, which is multiplied by some diagonal matrix in the
field of fractions $F$. Moreover each nonzero element of $w$ has the form $(a^i a^{i-1})^{-1}$, where $a^i$ is some minor of order $i$ of matrix $A$.

We call such triangular decomposition the Bruhat decomposition in the commutative domain $R$.

In \cite{15} a fast algorithm for adjoint matrix computation was proposed.   
On the basis of this algorithm for computing the adjoint matrix in the \cite{17} proposed a fast algorithm for $LDU$ decomposition. However, this algorithm is required to calculate the adjoint matrix and use it to calculate $LDU$ decomposition.

In this paper, we propose another algorithm that does not rely on the calculation of the adjoint matrix 
and which costs less number of operations.
We   construct the decomposition in the form $A=LDU$, where $L$ and $U$ are lower
and upper triangular matrices, $D$  is a matrix of permutation, which is multiplied by some diagonal matrix in the
field of fractions $F$ and  has the same rank as the matrix $A$. Then the Bruhat decomposition $VwU$ in the domain $R$ may be easily obtained using the
matrices $L$, $D$ and $U$.


\section{Triangular decomposition in domain}

Let $R$ be a commutative domain, $A=(a_{i,j}) \in R^{n \times n}$ be a matrix of order $n$, $\alpha^k_{i,j}$ be $k\times k$ minor of matrix $A$ which disposed in the rows $1,2, \ldots ,k-1,i$ and columns   $1,2, \ldots ,k-1,j$ for all integers $  i,j,k \in \{1, \ldots ,n\}$. We suppose that the row $i$ of the matrix $A$ is situated at the last row of the minor, and the column $j$ of the matrix $A$ is situated at the last column of the minor. We denote $\alpha^0=1$ and $\alpha^k = \alpha^k_{k,k}$ for all diagonal minors ($1\leq k \leq n $). And we use the notation $\de_{ij}$ for Kronecker delta.
\smallskip

Let $k$ and $s$ be  integers in the interval $0\leq k<s\leq n$, $\cA_s^k=(\al^{k+1}_{i,j})$ be the matrix of minors with size $(s-k)\times (s-k)$ which has elements $\al^{k+1}_{i,j}$, $i,j=k+1, \ldots ,s-1,s,$   and  $\cA_n^0=(\al^{1}_{i,j})=A$.

We shall use the following identity (see \cite{13}, \cite{14}):
\begin{theorem}[Sylvester determinant identity]
 
\noindent Let $k$ and $s$ be an integers in the interval $0\leq k<s\leq n$. Then it is true that
$$
\det (\cA_s^k)=\al^s (\al^k)^{s-k-1}.
\eqno (1)$$
\end{theorem}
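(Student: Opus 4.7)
The plan is to reduce Sylvester's identity to the Schur complement formula, handling first the generic case where $\alpha^k \neq 0$ and then extending to the general commutative‐domain case by a polynomial identity argument.

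First I would set up notation. Let $M$ denote the $k\times k$ leading principal submatrix of $A$, so $\det M = \alpha^k$; let $v_j = (a_{1,j},\ldots,a_{k,j})^T$ for $j=k+1,\ldots,s$ be the ``top'' columns, $w_i^T = (a_{i,1},\ldots,a_{i,k})$ for $i=k+1,\ldots,s$ the ``left'' rows, and $S$ the $s\times s$ leading principal submatrix of $A$. The bordered minor $\alpha^{k+1}_{i,j}$ is then exactly
$$
\alpha^{k+1}_{i,j} = \det\!\left(\begin{array}{cc} M & v_j \\ w_i^T & a_{i,j} \end{array}\right),
$$
so $\mathcal{A}_s^k$ is the matrix of these bordered determinants for $i,j\in\{k+1,\ldots,s\}$.

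Next, I would work over the field of fractions $F$ and temporarily assume $\alpha^k\neq 0$, so $M$ is invertible. By the Schur complement identity applied to each bordered minor,
$$
\alpha^{k+1}_{i,j} = \alpha^k\bigl(a_{i,j} - w_i^T M^{-1} v_j\bigr).
$$
This shows $\mathcal{A}_s^k = \alpha^k\, C$, where $C$ is the $(s-k)\times(s-k)$ Schur complement of $M$ in $S$. Pulling out the scalar from each of the $s-k$ rows gives
$$
\det\mathcal{A}_s^k = (\alpha^k)^{s-k}\det C.
$$
A second application of the Schur complement identity, this time to the full matrix $S$, yields $\det S = \det M \cdot \det C$, i.e.\ $\alpha^s = \alpha^k \det C$, so $\det C = \alpha^s/\alpha^k$. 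Substituting gives
$$
\det\mathcal{A}_s^k = (\alpha^k)^{s-k}\cdot\frac{\alpha^s}{\alpha^k} = \alpha^s(\alpha^k)^{s-k-1},
$$
which is identity $(1)$.

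The one genuine obstacle is the case $\alpha^k=0$, where the Schur complement argument breaks down because $M$ is not invertible. I would handle this by observing that both sides of $(1)$ are polynomials in the $n^2$ entries $a_{i,j}$ with integer coefficients. The identity has already been established on the Zariski‐open, hence dense, subset $\{\alpha^k\neq 0\}$ of the generic matrix (equivalently, in the polynomial ring $\mathbb{Z}[a_{i,j}]$ localized at $\alpha^k$, but since $\alpha^k$ is not a zero divisor, the equality lifts to $\mathbb{Z}[a_{i,j}]$ itself). Specializing the entries to any element of an arbitrary commutative domain $R$ then yields the identity unconditionally. This step is routine but must be stated explicitly, since the matrix $A$ lies in a commutative domain rather than a field.
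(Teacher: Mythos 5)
Your proof is correct. Note, however, that the paper does not prove this theorem at all: it is stated as a known identity imported from the author's earlier references, so there is no in-paper argument to compare yours against. Your route --- writing each bordered minor $\alpha^{k+1}_{i,j}$ as $\alpha^k\bigl(a_{i,j}-w_i^T M^{-1} v_j\bigr)$ via the block (Schur complement) determinant formula, factoring $\alpha^k$ out of the $s-k$ rows of $\mathcal{A}_s^k$, and then using $\det S=\det M\cdot\det C$ --- is the standard modern proof of Sylvester's identity, and every step checks out. Your handling of the degenerate case $\alpha^k=0$ is also the right move and is genuinely necessary in the commutative-domain setting: establishing the identity in $\mathbb{Z}[a_{i,j}]$ localized at the nonzero polynomial $\alpha^k$, observing that the localization map is injective because $\mathbb{Z}[a_{i,j}]$ is a domain, and then specializing to $R$, gives the unconditional statement. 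The only stylistic caveat is that the phrase about a Zariski-dense subset is redundant once you have the algebraic lifting argument; keep the polynomial-ring version, since that is the one that actually works over an arbitrary commutative domain.
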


\begin{theorem}[LDU decomposition of the minors matrix]

\noindent Let $A=(a_{i,j}) \in R^{n \times n}$ be the matrix of rank $r$, $\alpha^i\neq 0$ for $i=k,k+1, \ldots ,r$, $r\leq s\leq n$, then the matrix of minors $\cA_s^k$ is equal to the following product of three matrices:
$$
\cA_s^k = L^k_sD^k_sU^k_s  =(a^j_{i,j})  (\de_{ij}\alpha^k(\alpha^{i-1} \alpha^{i} )^{-1}) (a^i_{i,j}).
\eqno(2)
$$
\noindent The matrix $ L^k_s=(a^j_{i,j})$, $i=k+1 \ldots s$, $j=k+1 \ldots r$, is a low triangular matrix of size $(s-k)\times (r-k)$,
 the matrix $U^k_s=(a^i_{i,j})$, $i=k+1 \ldots r$, $j=k+1 \ldots s$, is an upper triangular matrix of size $(r-k)\times (s-k)$ and
$D^k_s= (\de_{ij}\alpha^k(\alpha^{i-1} \alpha^{i} )^{-1})$, $i=k+1 \ldots r$, $j=k+1 \ldots r$, is a diagonal matrix of size $(r-k)\times (r-k)$.
\end{theorem}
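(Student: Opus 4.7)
My plan is to verify (2) by comparing $(i,j)$ entries. First, $L^k_s$ and $U^k_s$ are indeed triangular: for $m>i$ the minor $\alpha^m_{i,m}$ uses rows $1,\ldots,m-1,i$ with row $i$ repeated, so $a^m_{i,m}=0$, and likewise $a^m_{m,j}=0$ whenever $m>j$. Consequently
$$(L^k_s D^k_s U^k_s)_{i,j}=\sum_{m=k+1}^{\min\{i,j,r\}}\frac{\alpha^k\,\alpha^m_{i,m}\,\alpha^m_{m,j}}{\alpha^{m-1}\alpha^m},$$
and the theorem reduces to showing that this sum equals $\alpha^{k+1}_{i,j}$ for all $k+1\le i,j\le s$.

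To prove the identity I would pass to the field of fractions $F$ and reinterpret $\cA_s^k$ as a rescaled Schur complement. Writing $A_s$ for the top-left $s\times s$ submatrix of $A$ and $A_k$ for its top-left $k\times k$ block, the $2\times 2$ block-determinant formula applied to the $(k+1)\times(k+1)$ minor defining $\alpha^{k+1}_{i,j}$ gives $(A_s/A_k)_{i,j}=\alpha^{k+1}_{i,j}/\alpha^k$, so $\cA_s^k=\alpha^k\,S$ where $S:=A_s/A_k$. The leading principal minors of $S$ are $\alpha^m/\alpha^k$, nonzero for $m=k,\ldots,r$, hence $S$ admits an LDU decomposition $S=L'\Lambda'U'$ over $F$ with unit triangular $L',U'$ and $\Lambda'=\diag(\alpha^m/\alpha^{m-1})_{m=k+1}^{r}$.

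The crux of the proof, which I expect to be the main obstacle, is to identify the entries of $L'$ and $U'$ with minor ratios, namely $L'_{i-k,\,m-k}=\alpha^m_{i,m}/\alpha^m$ and $U'_{m-k,\,j-k}=\alpha^m_{m,j}/\alpha^m$. These come from the classical formula expressing the entries of the triangular factors of an LDU decomposition as quotients of leading sub-minors of the underlying matrix, once the same block-determinant argument is applied a second time to rewrite sub-minors of $S$ in terms of minors of $A$ (here one uses that $\alpha^m_{i,m}$ is the determinant of the $m\times m$ sub-block of $A$ with rows $1,\ldots,m-1,i$ and columns $1,\ldots,m$, whose Schur complement on $A_k$ is the corresponding sub-minor of $S$). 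Once these formulas are in hand, absorbing the scalars $\alpha^m$ from $L'$ and $U'$ into the diagonal yields
$$\cA_s^k=\alpha^k L'\Lambda'U'=(\alpha^m_{i,m})\,\diag\!\Bigl(\frac{\alpha^k}{\alpha^{m-1}\alpha^m}\Bigr)(\alpha^m_{m,j})=L^k_s D^k_s U^k_s,$$
which is precisely equation (2).
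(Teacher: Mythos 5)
Your proof is correct, but it takes a genuinely different route from the paper's. The paper proves the entrywise identity $\alpha^{k+1}_{i,j}=\sum_{t=k+1}^{\min(i,j,r)}\alpha^t_{i,t}\,\alpha^k(\alpha^{t-1}\alpha^t)^{-1}\alpha^t_{t,j}$ by downward induction on $k$: the base case $k=r-1$ follows from the Sylvester identity $\alpha^{k+1}_{i,j}\alpha^{k+1}-\alpha^{k+1}_{i,k+1}\alpha^{k+1}_{k+1,j}=\alpha^{k+2}_{i,j}\alpha^{k}$ together with the vanishing of all minors of order exceeding the rank, and the inductive step uses the same identity to peel one rank-one summand off the expansion already established for $\cA_s^{k+1}$ and pass to $\cA_s^{k}$. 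You instead work in the field of fractions, recognize $\cA_s^k$ as $\alpha^k$ times the Schur complement $A_s/A_k$, and import two classical facts: the expression of the entries of the unit-triangular LDU factors as quotients of bordered leading minors, and the quotient (Crabtree--Haynsworth) formula turning sub-minors of the Schur complement back into minors of $A$. This is conceptually cleaner and makes the shape of the factors transparent, but it outsources the real content: both classical facts are themselves usually proved by essentially the induction on Sylvester's identity that the paper carries out, so nothing is saved if one insists on a self-contained argument in the commutative-domain setting. One small point you should make explicit: for $s>r$ the matrix $S$ is singular, so the existence of the \emph{thin} LDU decomposition (with $L'$ of size $(s-k)\times(r-k)$, $\Lambda'$ of size $(r-k)\times(r-k)$, and no residual term) requires the observation that $\rank S=r-k$, which follows from $\alpha^r\neq 0$ and $\rank A=r$; this is precisely the analogue of the paper's base-case remark that the minors $\alpha^{r+1}_{i,j}$ vanish.
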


\begin{proof}

\noindent
Let us write the matrix equation (2) for $k+1=r$
$$
(a^{k+1}_{i,j})=(a^{k+1}_{i,k+1})(\de_{k+1,k+1}a^{k}(a^{k}a^{k+1})^{-1})(a^{k+1}_{k+1,j})
\eqno(3)
$$
This equation is correct due to Sylvester determinant identity
$$
a^{k+1}_{i,j}a^{k+1} -a^{k+1}_{i,k+1}a^{k+1}_{k+1,j}=a^{k+2}_{i,j} a^{k},
\eqno(4)
$$
and the equality $a^{k+2}_{i,j}=0$. This equality is a consequence of
the fact that minors $a^{k+2}_{i,j}$
have the order greater then the rank of the matrix $A$.

Let for all $h$, $k<h< r $, the statement (1) be correct for matrices $\cA_s^h=(a^{h+1}_{i,j})$.
 We have to prove it for $h=k$.
Let us write one matrix element in (2) for the matrix $\cA_s^{k+1}=(a^{k+2}_{i,j})$ :
$$
a^{k+2}_{i,j}=\sum_{t=k+2}^{min(i,j,r)}
   a^t_{i,t}  \alpha^{k+1}(\alpha^{t-1} \alpha^{t} )^{-1} a^t_{t,j}.
$$
We have to prove the corresponding expression for the elements of the matrix $\cA_s^{k}$. Due to the
Sylvester determinant identity (3)
we obtain
$$a^{k+1}_{i,j}=a^{k+1}_{i,k+1}  (\alpha^{k+1} )^{-1} a^{k+1}_{k+1,j}+\alpha^k (\alpha^{k+1})^{-1}a^{k+2}_{i,j}=$$
$$
a^{k+1}_{i,k+1}  \alpha^k(\alpha^{k} \alpha^{k+1} )^{-1} a^{k+1}_{k+1,j}+\alpha^k (\alpha^{k+1})^{-1}
\sum_{t=k+2}^{min(i,j,r)}
   a^t_{i,t}  \alpha^{k+1}(\alpha^{t-1} \alpha^{t} )^{-1} a^t_{t,j}=$$
$$\sum_{t=k+1}^{min(i,j)}
   a^t_{i,t}  \alpha^{k}(\alpha^{t-1} \alpha^{t} )^{-1} a^t_{t,j}.$$
  
\end{proof}
%

\begin{consequence}[LDU decomposition of matrix $A$]

\par\noindent
Let $A=(a_{i,j}) \in R^{n \times n}$, be the matrix of rank $r$, $r\leq n$, $\alpha^i\neq 0$ for
$i=1,2, \ldots ,r$, then matrix  $A$ is equal to the following product of three matrices:
$$
 A = L^0_n D^0_n U^0_n = (a^j_{i,j})  (\de_{ij} (\alpha^{i-1} \alpha^{i} )^{-1}) (a^i_{i,j}).
\eqno(4)
$$
\noindent The matrix $ L^0_n=(a^j_{i,j})$, $i= 1 \ldots n$, $j= 1 \ldots r$, is a low triangular matrix of size $n\times  r $,
the matrix $U^0_n=(a^i_{i,j})$, $i= 1 \ldots r$, $j= 1 \ldots n$, is an upper triangular matrix of size $r \times n$ and
$D^0_n= (\de_{ij} (\alpha^{i-1} \alpha^{i} )^{-1})$, $i= 1 \ldots r$, $j= 1 \ldots r$, is a diagonal matrix of size $ r \times r $.
\end{consequence}

Let $I_n$ be the identity matrix and $P_n$ be the matrix with second unit diagonal.
\smallskip

\begin{consequence}[Bruhat decomposition of matrix $A$]
 
\par\noindent Let matrix $A=(a_{i,j})$  have the rank  $r$, $r\leq n$, and $B=P_nA$. 
Let $B=LDU$ be the LDU-decomposition of matrix $B$. Then
$V=P_nLP_r$ and $U$ are  upper triangular matrices of size $n\times r$ and $r\times n$ correspondingly
and
$$
A=V (P_rD) U
\eqno(5)
$$
is the Bruhat decomposition of matrix $ A$.
\end{consequence}


We are interested in the block form of decomposition algorithms for LDU and Bruhat decompositions.
Let us use some block matrix notations.

For any matrix $A$ (or $A^p_q$) we denote by $A^{i_1,i_2}_{j_1,j_2}$
(or $A^{p;i_1,i_2}_{q;j_1,j_2}$) the block which stands at the intersection of rows  $i_1+1, \ldots ,i_2 $ and columns  $j_1+1, \ldots ,j_2$  of the matrix.  We denote by $A^{i_1 }_{i_2 }$ the diagonal block  $A^{i_1,i_2}_{i_1,i_2}$.


\section{ LDU algorithm }

 \noindent
{\it Input:} ($\cA_n^{k},\al^k$), $0\leq k<n$.

 \noindent
{\it Output:} $\{ L^k_n, \{  \al^{k+1} ,  \al^{k+2}, \ldots , \al^n \},  U^k_n, M^k_n, W^k_n \}$,

 \noindent
 where $D^k_n=\al^k\diag\{\al^k \al^{k+1}, \ldots , \al^{n-1} \al^{n}  \}^{-1}$,
$ M^k_n=\al^k(L^k_n D^k_n)^{-1},\ W^k_n=\al^k(D^k_n U^k_n)^{-1}$.
\bigskip

\noindent
1. If  $k=n-1$,  $\cA_n^{n-1}=(a^n)$ is a matrix of the first order, then we obtain
 $$\{ a^n, \{a^n\}, a^n, a^{n-1}, a^{n-1}  \}, \ \ D^{n-1}_n=(\al^n)^{-1}.$$

\noindent
2. If  $k=n-2$,  $\cA_n^{n-2}=\left(\begin{array}{cc} \al^{n-1} &\be\\\ga&\de \end{array}\right) $ is a matrix of second order, then we obtain

\noindent
$$
\bigg\{
\left(\begin{array}{cc} \al^{n-1}&0\\ \ga& \al^n
\end{array}\right),
 \{ \al^{n-1}, \al^{n} \},
\left(\begin{array}{cc}  \al^{n-1}&\be\\ 0&\al^n
\end{array}\right), 
\left(\begin{array}{cc}\al^{n-2}&0\\- \ga& \al^{n-1}
\end{array}\right),
\left(\begin{array}{cc} \al^{n-2}&-\be\\ 0&\al^{n-1}
\end{array}\right)
\bigg\}
$$

\noindent
where $ \al^n= { (\al^{n-2})}^{-1} \Dmtr { \al^{n-1}} \be  \ga \de $,
  $D^{n-2}_n=\al^{n-2}\diag\{\al^{n-2} \al^{n-1},\al^{n-1} \al^{n}  \}^{-1}$.

\noindent
3. If the order of the matrix $ \cA^{k}_n  $
more than two ($0\leq k <n-2 $), then we choose an integer $ s $ in the interval
$ ( k <s <n) $ and divide the matrix into blocks
$$
\cA_n^{k}=
\left(\begin{array}{cc} \cA_s^{k} &\bB\\ \bC& \bD
\end{array}\right).
\eqno(6)
$$

\noindent
3.1. Recursive step
%
$$
\{L^k_s, \{  \al^{k+1} ,  \al^{k+2}, \ldots , \al^s \},  U^k_s,
 M^k_s,\ W^k_s  \}
= \mathbf{LDU}(\cA_s^{k},\al^k)
$$

\noindent
3.2.  We compute
$$
\widetilde U= (\al^k)^{-1} M^k_s\bB, \ \
   \widetilde L=  (\al^k)^{-1} \bC  W^k_s,
\eqno(7)
$$
$$
\cA_n ^{s}=(\al^k)^{-1}\al^s( \bD -  \widetilde L D^k_s \widetilde U ).
\eqno(8)
$$

\noindent
3.3.   Recursive step
$$
\{L_n ^{s}, \{  \al^{s+1} ,  \al^{s+2}, \ldots , \al^n \},  U_n ^{s},  M^s_n,\ W^s_n  \}
= \mathbf{LDU}(\cA_n ^{s},\al^s )
$$

\noindent
3.4 {\it Result:}
$$
\{L^k_n, \{  \al^{k+1} ,  \al^{k+2}, \ldots , \al^n \}, U^k_n, M^k_n, W^k_n  \},
$$
where
$$
L^k_n=\left(\begin{array}{cc}   L^k_s&0\\ \widetilde L& L^s_n
\end{array}\right),
\
 U^k_n= \left(\begin{array}{cc}   U^k_s& \widetilde U \\ 0& U^s_n
\end{array}\right),
\eqno(9)
$$

$$
M^k_n =\left(\begin{array}{cc}   M^k_s  &0\\
 -   M^s_n \widetilde LD^k_s M^k_s /\al^{k} & \ \  M^s_n \ \
\end{array}\right),
\eqno(10)
$$
$$
 W^k_n = \left(\begin{array}{cc} \ \  W^k_s \ \  &  -  W^k_s D^k_s \widetilde U W^s_n/\al^{k} \\
 0 &   W^s_n
\end{array}\right).
\eqno(11)
$$

\section{Proof of the correctness of the LDU algorithm}

Proof of the correctness of this algorithm is based on several determinant identities.

\begin{definition}[$\delta^k_{i,j}$ minors and  ${\mathcal G}^{k}$ matrices]

Let $A\in R^{n\times n}$ be a matrix.
The determinant of the matrix, obtained from the upper left block
$A^{0,k}_{0,k} $ of matrix $A$
by the replacement in matrix $A$ of the column $i$ by the column $j$ is denoted by
$\delta^k_{i,j}$. The matrix of such minors is denoted by
$$
{\mathcal G}^{k}_s=(\delta^{k+1}_{i,j})
\eqno(12)
$$
\end{definition}

We need the following theorem (see \cite{13} and \cite{14}):

\begin{theorem}[Base minor's identity]
 
Let $A\in R^{n\times n}$ be a matrix
and  $i,j,s,k$, be integers in the intervals: $  0\leq k < s\leq  n, \ 0<i,
 j\leq n$. Then the following identity is true
$$
\al^{s} \al_{ij}^{k+1}-\al^k a_{ij}^{s+1}= \sum_{p=k+1}^{s}
\al_{ip}^{k+1} \delta_{pj}^{s}.
 \eqno(13)
$$
\end{theorem}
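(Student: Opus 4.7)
The plan is to compute a single $(s-k+1)\times(s-k+1)$ determinant in two different ways. Let $\Delta_{ij}$ denote the determinant of the matrix obtained from $\cA^k_s$ by appending the row $(\alpha^{k+1}_{i,r})_{r=k+1,\ldots,s}$, the column $(\alpha^{k+1}_{q,j})_{q=k+1,\ldots,s}$, and the corner entry $\alpha^{k+1}_{ij}$. I would first dispose of the cases $i\le s$ or $j\le s$ (both sides of the identity vanish by a repeated-row/column argument), so one may assume $i,j>s$.

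For the first evaluation, let $A''$ be obtained from $A$ by swapping rows $s{+}1\leftrightarrow i$ and columns $s{+}1\leftrightarrow j$. The entries of $\cA^k_{s+1}(A'')$ then coincide position-by-position with those of the $\Delta_{ij}$-matrix, so $\Delta_{ij}=\det\cA^k_{s+1}(A'')$. The swaps leave the top-left $k\times k$ block untouched, so $\alpha^k(A'')=\alpha^k$, and the leading $(s+1)\times(s+1)$ minor of $A''$ equals exactly $a^{s+1}_{ij}$. Theorem~1 applied to $A''$ therefore yields $\Delta_{ij}=a^{s+1}_{ij}(\alpha^k)^{s-k}$.

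For the second evaluation I would Laplace-expand $\Delta_{ij}$ along its last row. The summand corresponding to the last column is $\alpha^{k+1}_{ij}\det\cA^k_s=\alpha^{k+1}_{ij}\alpha^s(\alpha^k)^{s-k-1}$ directly by Theorem~1. For each $r\in\{k+1,\ldots,s\}$, the minor obtained by deleting the last row and column $r$ coincides---after a column reordering of sign $(-1)^{s-r}$---with $\det\cA^k_s(A''')$, where $A'''$ is $A$ with columns $r$ and $j$ swapped; Theorem~1 then evaluates this as $\delta^s_{rj}(\alpha^k)^{s-k-1}$. Combined with the Laplace sign $(-1)^{(s-k+1)+(r-k)}$, the net coefficient of $\alpha^{k+1}_{ir}\delta^s_{rj}(\alpha^k)^{s-k-1}$ in this summand is $-1$.

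Equating the two expressions for $\Delta_{ij}$ and canceling the common factor $(\alpha^k)^{s-k-1}$ (a legitimate polynomial-identity cancellation over the domain) produces exactly the Base minor's identity. The main obstacle is the sign accounting: the Laplace sign, the column-reordering sign, and the precise matching of columns between $A'''$ and the deletion minor must all combine to yield the uniform coefficient $-1$, and that verification is where the bulk of the careful work lies.
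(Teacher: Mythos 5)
The paper does not actually prove this identity --- it is quoted from \cite{13} and \cite{14} --- so there is no in-text proof to compare against; your proposal supplies a self-contained one, and its core is correct. The double evaluation of the bordered minor matrix works exactly as you describe: Sylvester's identity applied to $A$ with rows $s{+}1\leftrightarrow i$ and columns $s{+}1\leftrightarrow j$ swapped gives $\Delta_{ij}=a^{s+1}_{ij}(\alpha^k)^{s-k}$; the Laplace expansion along the last row, together with Sylvester applied to the column-swapped matrices, gives $\Delta_{ij}=\alpha^{s}\alpha^{k+1}_{ij}(\alpha^k)^{s-k-1}-\sum_{p}\alpha^{k+1}_{ip}\delta^s_{pj}(\alpha^k)^{s-k-1}$; and the sign bookkeeping does close up, since $(-1)^{(s-k+1)+(p-k)}\cdot(-1)^{s-p}=(-1)^{2(s-k)+1}=-1$ uniformly. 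Cancelling $(\alpha^k)^{s-k-1}$ by passing to the generic matrix over $\mathbb{Z}[a_{11},\dots,a_{nn}]$ and specializing is legitimate. This is a nice explicit argument, essentially a bordered-Sylvester computation, where the paper merely points to its references.

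One parenthetical claim is wrong, though easily repaired. For $i\le s$ or $j\le s$ the two sides of (13) do \emph{not} both vanish in general: for $i>s$ and $k+1\le j\le s$ one has $a^{s+1}_{ij}=0$ and $\delta^s_{pj}=0$ for $p\ne j$ while $\delta^s_{jj}=\alpha^s$, so both sides equal $\alpha^s\alpha^{k+1}_{ij}$, which is generically nonzero; and for $k+1\le i\le s$, $j>s$ the right-hand side is a genuinely nontrivial Cramer-type sum that equals $\alpha^s\alpha^{k+1}_{ij}$ rather than zero. So ``dispose of by a repeated-row/column argument showing both sides vanish'' does not go through as stated. The cleanest repair is to drop the case split entirely: when $i\le s$ or $j\le s$ the bordering row or column of the $\Delta_{ij}$-matrix is either identically zero (if $i\le k$ or $j\le k$) or duplicates an existing row or column of $\cA^k_s$ (if $k+1\le i\le s$ or $k+1\le j\le s$), so $\Delta_{ij}=0=a^{s+1}_{ij}(\alpha^k)^{s-k}$ holds trivially, while the Laplace-expansion evaluation is valid for all $i,j$; equating the two evaluations then yields (13) in full generality with no separate edge cases.
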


The minors $ a_{ij}^{s+1}$ in the left side of this identity
equal zero if $i<s+1$. Therefor this theorem gives the
following

\begin{consequence}  

Let $A\in R^{n\times n}$ be a matrix
and  $s, k$, be integers in the intervals:   $  0\leq k < s \leq  n $.
  Then  the following identities are true
$$
\al^s  U^{k;k+1, s}_{n; s+1, n}= U^{k}_{s} \cG^{k;k+1, s}_{n; s+1, n}.
\eqno(14)
$$
$$
\al^s \cA^{k;k+1, s}_{n; s+1, n}=\cA^{k}_{s} \cG^{k;k+1, s}_{n; s+1, n}. \eqno(15)
$$
\end{consequence}

The block $\cA^{k;k+1, s}_{n; s+1, n}$ of the matrix $\cA^{k}_n$ was denoted by $\bB$.
Due to Sylvester identity we can write the equation for the adjoint matrix
$$(A^k_s)^{*} = (A^k_s)^{-1} (\al^s)(\al^k)^{s-k-1}
\eqno(16)
$$
Let us multiply both sides of equation (15) by adjoint matrix $(A^k_s)^{*}$ and use the equation (16).
Then we get

\begin{consequence}
$$  (A^k_s)^{*} \bB=
  (A^k_s)^{*} \cA^{k;k+1, s}_{n; s+1, n}=  (\al^k)^{s-k-1} \cG^{k;k+1, s}_{n; s+1, n}.
\eqno(17)
$$
\end{consequence}

As well as $L^k_sD^k_sU^k_s=A^k_s$,
$$M^k_s= \al^k(L^k_sD^k_s)^{-1}=\al^k U^k_s (A^k_s)^{-1} \hbox{\  and \ }
W^k_s= \al^k (D^k_s U^k_s)^{-1}.
\eqno(18)
$$
 Therefor
$$\widetilde U= (\al^k)^{-1} M^k_s\bB=(\al^k)^{-1} U^k_s (A^k_s)^{-1}\bB=(\al^s)^{-1}(\al^k)^{-s+k}
U^k_s (A^k_s)^{*}\bB.
\eqno(19)
$$
Equations (19), (17), (14) give the

\begin{consequence}
$$\widetilde U=
 U^{k;k+1, s}_{n; s+1, n}
\eqno(20)
$$
\end{consequence}

In the same way we can prove

\begin{consequence} 
$$
\widetilde L=  L^{k;s+1, n  }_{n; k+1, s}  .
\eqno(21)
$$
\end{consequence}

Now we have to prove the identity (8). Due to the equations (14)-(19) we obtain
$$
\widetilde L D^k_s \widetilde U=(\al^k)^{-1} \bC  W^k_s D^k_s   (\al^k)^{-1} M^k_s\bB=
$$
$$
(\al^k)^{-2} \bC (A^k_s)^{-1} \bB=(\al^k)^{-s+k-1}(\al^s)^{-1} \bC (A^k_s)^{*} \bB
\eqno(22)
$$

The identity
$$\cA_n ^{s}=(\al^k)^{-1}(\al^s \bD - (\al^k)^{-s+k+1} \bC (A^k_s)^{*} \bB)
\eqno(23)
$$
was proved in \cite{13} and \cite{14}.
Due to (20) and (21) we obtain the identity (8).

To prove the formula (10) and (11) it is sufficient to verify the identities 
$ M^k_n=\al^k(L^k_n D^k_n)^{-1}$ and $ W^k_n=\al^k(D^k_n U^k_n)^{-1}$ using (9),(10), (11) and definition  $D^k_n=\al^k\diag\{\al^k \al^{k+1}, \ldots , \al^{n-1} \al^{n}  \}^{-1}$.

\section{Complexity}

\begin{theorem}
The algorithm has the same complexity as matrix multiplication.
\end{theorem}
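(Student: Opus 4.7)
The plan is to read off a recurrence from Step 3 of the algorithm with the balanced split $s=\lfloor(k+n)/2\rfloor$, so that both recursive calls 3.1 and 3.3 act on matrices of size $N/2$, where $N=n-k$ is the size of the input $\cA_n^k$. The base cases 1 and 2 contribute $O(1)$ arithmetic operations in $R$, and all other work in Step 3 consists of matrix additions, element-wise exact divisions by $\al^k$, and a bounded number of products of $(N/2)\times(N/2)$ blocks.

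Concretely, I would tabulate the non-recursive work: in 3.2, the formulas (7) and (8) use matrix products $M^k_s\bB$, $\bC W^k_s$, and $\widetilde L D^k_s\widetilde U$, each a constant number of $(N/2)\times(N/2)$ multiplications (the factor $D^k_s$ is diagonal, so it costs only $O(N^2)$ and does not enter the dominant term); in 3.4, the formulas (10) and (11) for $M^k_n$ and $W^k_n$ add a further bounded number of $(N/2)\times(N/2)$ products of the same kind. All scalings by $(\al^k)^{\pm 1}$ and $\al^s$ are element-wise and cost $O(N^2)$. Letting $M(N)$ denote the cost of $N\times N$ matrix multiplication in $R$, so that $M(N)=O(N^\omega)$ with $2<\omega\le 3$, the non-recursive work at a call of size $N$ is bounded by $cM(N/2)\le c' N^{\omega}$ for absolute constants.

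Hence the cost $T(N)$ of the algorithm satisfies
\begin{equation*}
T(N)\le 2\,T(N/2)+c'N^{\omega},\qquad T(O(1))=O(1).
\end{equation*}
Since $2\cdot(N/2)^\omega=N^\omega/2^{\omega-1}$ and $\omega>2$, the non-recursive term dominates and the master theorem yields $T(N)=O(N^\omega)=O(M(N))$, i.e.\ the same order as matrix multiplication.

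The only real subtlety, and the point where I would be most careful, is to justify that every division appearing in (7)–(8) and in (10)–(11) is an exact division in $R$, so that it remains one operation of the domain and does not inflate the per-level cost. This is what the determinant identities of Section~4 (in particular (17), (19), (22), (23)) guarantee: the quantities $(\al^k)^{-1}M^k_s\bB$, $(\al^k)^{-1}\bC W^k_s$, and $(\al^k)^{-1}\al^s(\bD-\widetilde L D^k_s\widetilde U)$, together with the entries of $M^k_n$ and $W^k_n$ in (10)–(11), were shown to equal expressions whose numerators are divisible by the claimed denominators in $R$. With that in hand, no hidden cost is concealed inside the updates, and the recurrence analysis above is valid, proving the theorem.
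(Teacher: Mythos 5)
Your proposal is correct and follows essentially the same route as the paper: both count a constant number of half-size matrix multiplications per call (the paper counts exactly $7$, treating the diagonal factors as negligible), derive the recurrence $t(n)=2\,t(n/2)+O(M(n/2))$, and conclude $O(M(n))$ because the exponent of matrix multiplication exceeds $2$; the paper just unrolls the sum explicitly where you invoke the master theorem. Your added remark on the exactness of the divisions by $\al^k$ is a worthwhile point the paper leaves to its correctness section rather than restating here.
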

\begin{proof}

The total amount of matrix multiplications in (7)-(15) is equal to
7 and the total amount of recursive calls is equal to 2. We do not
consider multiplications of the diagonal matrices.

We can compute the decomposition  of the second order matrix  by
means of 7 multiplicative operations. Therefore  we get the
following recurrent equality for complexity
$$
t(n)=2 t(n/2)+  7 M(n/2),\ \  t(2)=7.
$$
Let $\gamma$ and $\beta$ be constants, $3\geq\beta>2$, and let
  $M(n)= \gamma n^{\beta} + o(n^{\beta})$ be the number of multiplication
  operations in one $n\times n$ matrix multiplication.

After summation from  $n=2^k$ to $2^1$ we obtain
$$
 7  \gamma(2^0 2^{\beta\cdot  (k-1)} + \ldots + 2^{k-2}2^{\beta \cdot 1})+2^{k-2}
7 =  7 \gamma\frac{n^{\beta}- n 2^{\beta-1} }{2^{\beta}-2} +
\frac{7}{4}n.
$$
Therefore the complexity of the decomposition is
$$
\sim
\frac{  7 \gamma n^{\beta}}{2^{\beta}-2}
$$
\end{proof}
 
\section{The exact triangular decomposition}

\begin{definition}
A decomposition of the matrix $ A $ of rank $ r $ over a commutative domain $ R $ in the product of five matrices
$$
A = PLDUQ
\eqno (24)
$$
is called {\em exact triangular decomposition} if $ P $ and $ Q $ are permutation matrces,
$ L $ and $ PLP^T $ are nonsingular lower triangular matrices, 
$ U $ and $ Q^T UQ $ are nonsingular upper triangular matrices over $ R $, 
$ D = \diag (d_1^{-1}, d_2^{-1 }, .., d_r^{-1}, 0, .., 0) $ is a diagonal matrix of rank $ r $,  $d_i\in R \backslash \{0\}$, $i=1,..r$. 
\end{definition}

Designation: ${\cal \mathbf  {ETD}}(A)=(P,L,D,U,Q)$.

\begin{theorem}[Main theorem] Any matrix over a commutative domain has an exact triangular decomposition.
\end{theorem}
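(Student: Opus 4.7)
The plan is to reduce to Consequence 1 (the LDU decomposition when the leading principal minors are nonzero) by first normalizing $A$ with row and column permutations, then padding the resulting trapezoidal factors to full size.

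First I would show that for any $A \in R^{n \times n}$ of rank $r$ there exist permutation matrices $P_1, Q_1$ such that the leading principal minors $\alpha^1, \ldots, \alpha^r$ of $\tilde A := P_1 A Q_1$ are all nonzero in $R$. Since $\rank(A) = r$, some $r \times r$ submatrix of $A$ is nonsingular; a permutation brings it to the top-left. Inside this nonsingular $r \times r$ block, a descending induction on $k$, from $k = r$ down to $k = 1$, produces additional row and column swaps ensuring each leading $k \times k$ minor is nonzero: by Laplace expansion, a nonsingular $k \times k$ matrix has at least one nonzero $(k-1)$-cofactor in its last row, and a column swap brings that cofactor to the leading position without destroying the larger leading minors already arranged.

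Next, Consequence 1 applied to $\tilde A$ yields $\tilde A = \tilde L \tilde D \tilde U$ with $\tilde L$ lower trapezoidal of size $n \times r$, $\tilde U$ upper trapezoidal of size $r \times n$, and $\tilde D = \diag((\alpha^{i-1}\alpha^i)^{-1})$ of size $r \times r$. Extend $\tilde L$ to a nonsingular lower triangular $L \in R^{n \times n}$ by adjoining the unit columns $e_{r+1}, \ldots, e_n$; dually extend $\tilde U$ to $U$. Define $D := \diag(d_1^{-1}, \ldots, d_r^{-1}, 0, \ldots, 0)$ with $d_i := \alpha^{i-1}\alpha^i$. Then $LDU = \tilde A$ by inspection (the zero block of $D$ absorbs the padded columns and rows), and setting $P := P_1^T$, $Q := Q_1^T$ gives $A = PLDUQ$ of the required shape.

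The main obstacle is then verifying the compatibility conditions that $PLP^T$ is lower triangular and $Q^T U Q$ upper triangular. These conditions are not automatic for arbitrary permutations produced by the reduction step; writing $\sigma$ for the permutation encoding $P^T$, one needs $\sigma^{-1}(i) \geq \sigma^{-1}(j)$ whenever $L_{i,j} \neq 0$ and $i > j$, together with the dual condition for $U$ and $Q$. I would address this by interleaving the pivot selection of the first step with the compatibility constraint: at each stage of the inductive construction, among the admissible row and column candidates that keep the next leading minor nonzero, choose one whose newly created below-diagonal entries of $L$ lie at positions whose image under the planned conjugation still lies at or below the diagonal. The padding columns of $L$ and padding rows of $U$ are invariant under the natural block-preserving permutations, so the constraint effectively concerns only the top-left $r \times r$ sub-blocks. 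Proving that a valid choice is always available at every step --- so that the two compatibility constraints can be satisfied simultaneously with the nonzero-leading-minor constraint --- is the technical heart of the theorem and the step I expect to require the greatest care.
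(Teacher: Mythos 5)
Your reduction to Consequence~1 proves too little: it produces \emph{a} factorization $A=PLDUQ$, but the defining feature of the exact triangular decomposition --- that $PLP^T$ is again lower triangular and $Q^TUQ$ again upper triangular --- is exactly the part you defer to the end, and it cannot be bolted on after the fact. Those two conditions are what make the factorization a generalized Bruhat decomposition rather than an $LU$ decomposition with pivoting: the zero/nonzero pattern of $PDQ$ is forced by the ranks of the corner submatrices of $A$, so the permutations are essentially determined by $A$ and cannot be chosen freely among all those that make the leading minors of $P_1AQ_1$ nonzero up to order $r$. Your proposed repair --- interleave the pivot selection with the compatibility constraint and assert that an admissible choice always exists at every step --- is precisely the content of the theorem, and you give no argument for it; a greedy choice at step $k$ can obstruct every admissible choice at step $k+1$, so at minimum you would need an invariant characterizing which partial permutations extend to a full admissible one. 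As written, the proof has a genuine gap at its central point, and you say so yourself.

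The paper takes a different route that avoids this difficulty. It proceeds by induction on the matrix size via a block splitting $\cA=\left(\begin{array}{cc}\bA & \bB\\ \bC & \bD\end{array}\right)$, recursively decomposing $\bA$, forming the residual blocks $\bB_1$, $\bC_1$, $\bD'$, recursively decomposing those, and assembling explicit block permutation matrices $\bP_1,\ldots,\bP_6$, $\bQ_1,\ldots,\bQ_6$. The compatibility conditions are not checked at the end against an arbitrary permutation; they are propagated through the recursion by the auxiliary Property of the factors (conditions ($\alpha$) and ($\beta$)), carried as a strengthened induction hypothesis, and triangularity of ${\cal L}=\bP\bL\bP^{T}$ and ${\cal U}=\bQ^{T}\bU\bQ$ is verified by conjugating through the chain of block permutations one factor at a time, as in (32) and (33). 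If you wish to salvage your approach, you would have to identify the admissible permutation combinatorially from the rank function of the submatrices of $A$ and then prove your condition $\sigma^{-1}(i)\ge\sigma^{-1}(j)$ on the support of $L$ together with its dual for $U$ --- a different, and not shorter, argument.
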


Before proceeding to the proof, we note that the exact triangular decomposition relates the $ LU $ decomposition and the Bruhat decomposition in the field of fractions.

If $ D $ matrix is combined with $ L $ or $ U $, we get the expression $ A = PLUQ $. This is the $LU$-decomposition with permutations of rows and columns. If the factors are grouped in the following way:
$$
A=(PLP^T)(PDQ)(Q^TUQ), 
$$
then we obtain $ \bL \bD \bU $-decomposition.  If $S$ is a permutation matrix in which the unit elements are placed on the secondary diagonal, then $ (S \bL S) (S^T \bD) \bU $ is the Bruhat decomposition of the matrix $(SA)$. 

Bruhat decomposition can be obtained from those $ PLUQ $-decomposition that satisfy the additional conditions:
  matrix $ PLP^T $ and $ Q^TUQ $ are triangular. Conversely, $LU$-decomposition can be obtained from the Bruhat decomposition $ V'D'U'$. This can be done if the permutation matrix $ D $ can be decomposed into a product of permutation matrices $ D'= PQ $ so that the $ P^TL'P $ and  $ QU'Q^T $ are triangular matrices.

If matrix $ A $ is a zero matrix, then ${\cal \mathbf  {ETD}}(A)=(I,I,0,I,I)$. 

If $ A $ is a nonzero matrix of the first order, then ${\cal \mathbf  {ETD}}(A)=(I,a,a^{-1},a,I)$. 

Let us consider a non-zero matrix of order two. We denote
$$
\cA =\left(\begin{array}{cc} \al  &\be\\\ga&\de \end{array}\right), \ 
\Delta =\left|\begin{array}{cc} \al  &\be\\\ga&\de \end{array}\right|,
\ve= \left\{\begin{array}{l } \Delta, \ \Delta\neq 0  \\
1, \ \Delta= 0.  \end{array}\right.
$$
Depending on the location of zero elements, we consider four possible cases. For each case, we give the exact triangular decomposition: \\
$
\hbox{If }\  \al\neq 0,\ \hbox{ then }\ 
\cA =\left(\begin{array}{cc} \al  &0\\\ga&\ve \end{array}\right)
\left(\begin{array}{cc} \al^{-1}  &0\\0&\De^{-1} \al^{-1}\end{array}\right) 
\left(\begin{array}{cc} \al  &\be\\0&\ve \end{array}\right). 
$ \\
$
\hbox{If }\ \al=0, \  \be\neq 0,\ \hbox{ then }\ 
\cA =\left(\begin{array}{cc} \be  &0\\ \de&\ve \end{array}\right)
\left(\begin{array}{cc} \be^{-1}  &0\\0& -\De^{-1} \be^{-1}\end{array}\right) 
\left(\begin{array}{cc} \be  &0 \\ 0&\ve \end{array}\right)
\left(\begin{array}{cc} 0  &1 \\ 1&0 \end{array}\right).
$ \\
$
\hbox{If }\ \al=0, \  \ga\neq 0,\ \hbox{ then }\ 
\cA =\left(\begin{array}{cc} 0  &1 \\ 1&0 \end{array}\right)
\left(\begin{array}{cc} \ga  &0\\ 0&\ve \end{array}\right)
\left(\begin{array}{cc} \ga^{-1}  &0\\0& -\De^{-1} \ga^{-1}\end{array}\right) 
\left(\begin{array}{cc} \ga  &\de \\ 0&\ve \end{array}\right).  
$ \\
$
\hbox{If }\ \al=\be=\ga=0,  \de\neq 0,  \hbox{ then }\ 
\cA =\left(\begin{array}{cc} 0  &1 \\ 1&0 \end{array}\right)
\left(\begin{array}{cc} \de  &0\\ 0&1 \end{array}\right)
\left(\begin{array}{cc} \de^{-1}  &0\\0& 0\end{array}\right) 
\left(\begin{array}{cc} \de  &0 \\ 0&1 \end{array}\right)
\left(\begin{array}{cc} 0  &1 \\ 1&0 \end{array}\right).
$ \\

There are only two different cases  for matrices of size $ 1 \times 2$: \\
$
\hbox{If }\ \al \neq 0,\ \hbox{ then }\ 
\left(\begin{array}{cc} \al & \be \end{array}\right) = 
\left(\begin{array}{cc} \al  \end{array}\right)
\left(\begin{array}{cc} \al^{-1} & 0\end{array}\right) 
\left(\begin{array}{cc} \al & \be \\ 0& 1 \end{array}\right).
$ \\
$
\hbox{If }\ \al=0, \  \be \neq 0,\ \hbox{ then }\ 
\left(\begin{array}{cc} 0 & \be \end{array}\right) = 
\left(\begin{array}{cc} \be  \end{array}\right)
\left(\begin{array}{cc} \be^{-1} & 0\end{array}\right) 
\left(\begin{array}{cc} \be & 0 \\ 0& 1 \end{array}\right)
\left(\begin{array}{cc} 0  &1 \\ 1&0 \end{array}\right).
$ \\
Two   cases for matrices of size $ 2 \times 1$ can be easily obtained by a simple transposition.

These examples allow us to formulate

\begin{sentence}

For all matrices $ \cA $ of size $ n \times m $, $ n, m <3 $ there exists an exact triangular decomposition.
\end{sentence}

In addition, we can formulate the following property, which holds for triangular matrices and permutation matrices in the exact triangular decomposition.

We denote by $ I_s $ the identity matrix of order $ s $.

\begin{property}[Property of the factors]
For a matrix $ A \in R^{n \times m} $ of rank $ r $, $ r <n, r <m $ over a commutative domain $ R $ there exists the exact triangular decomposition (24) in which 

\noindent
($\alpha$) the matrices $ L $ and $ U $ are of the form
$$
L = \left(\begin{array}{cc} L_1  &0\\ L_2 & I_{n-r} \end{array}\right) 
U= \left(\begin{array}{cc} U_1  &U_2\\0&I_{m-r} \end{array}\right),
\eqno(25)
$$
($\beta$)  the matrices $ PLP^T $ and $ Q^TUQ $ remain triangular after replacing in the matrices $L$ and $Q$ of unit blocks $ I_{n-r} $ and $ I_{m-r} $ by arbitrary triangular blocks.
\end{property}

Without loss of generality of the main theorem, we shall prove it for the exact triangular decompositions with the property 1.
We prove it by induction. The theorem is true for matrices of sizes smaller than three. 

We consider a matrix $ \cA $ of size $ N \times M $. Assume that all matrices   of size less than $ n \times m $ have the exact triangular decomposition. We split the matrix $ \cA $ into blocks:
$
\cA =  \left(\begin{array}{cc} \bA  &\bB\\ \bC & \bD \end{array}\right),
$
where   $\bA\in R^{n\times n}$, $n<N$, $n<M$.
 
(1). Let the block $\bA$ have the full rank. There exists exact triangular decomposition of this block: 
$  
\bA=P_1L_1D_1U_1Q_1.  
$ Here the diagonal matrix $ D_1 $ has full rank and the matrix $\cA $ is decomposed into the factors:
$$
\left(\begin{array}{cc} P_1  &0\\ 0 & I \end{array}\right)
\left(\begin{array}{cc} L_1  &0\\  \bC Q_1^{T}U_1^{-1}D_1^{-1} & I \end{array}\right)
\left(\begin{array}{cc} D_1  &0\\ 0 & \bD  \end{array}\right)
\left(\begin{array}{cc} U_1  &D_1^{-1}L_1^{-1}P_1^{T}\bB\\ 0 & I \end{array}\right)
\left(\begin{array}{cc} Q_1  &0\\ 0 & I \end{array}\right). 
$$
The matrix $\bD $
also has the exact triangular decomposition
$\bD =P_2L_2D_2U_2Q_2$. Substituting it in this decomposition, we obtain a new decomposition of the matrix $\cA$: 
$$
\left(\begin{array}{cc} P_1  &0\\ 0 & P_2 \end{array}\right)
\left(\begin{array}{cc} L_1  &0\\ P_2^T\bC Q_1^{T}U_1^{-1}D_1^{-1} & L_2 \end{array}\right)
\left(\begin{array}{cc} D_1  & 0 \\ 0 & D_2 \end{array} \right)
\left(\begin{array}{cc} U_1  &D_1^{-1}L_1^{-1}P_1^{T}\bB Q_2^{T}\\ 0 & U_2 \end{array}\right)
\left(\begin{array}{cc} Q_1  &0\\ 0 & Q_2 \end{array}\right). 
$$
 
It is easy to see that this decomposition is exact triangular if both block decompositions were exact triangular.

(2)   
Let the block $\bA$ has rank $r$, $r<n$. There exists exact triangular decomposition of this block: 

$$  
 \bA  =P_1L_1D_1U_1Q_1.  
$$
Here $U_1=\left(\begin{array}{cc} U_0  &V_0\\ 0 & I  \end{array}\right)$, 
$L_1=\left(\begin{array}{cc} L_0  & 0\\ M_0 & I  \end{array}\right)$ and
the diagonal matrix $D_1=\left(\begin{array}{cc} d_1  &0\\ 0 & 0  \end{array}\right)$ has a block 
$d_1$ of rank $r$.

Let us denote  $(\bC_0, \bC_1)$= 
$ \bC Q_1^{T} \left(\begin{array}{cc} U_0^{-1}  & -V_0\\ 0 & I  \end{array}\right) $ and 
$\left(\begin{array}{c} \bB_0  \\ \bB_1 \end{array}\right)$= 
$\left(\begin{array}{cc} L_0^{-1}  & 0\\ -M_0 & I  \end{array}\right) P_1^{T}\bB$.
Then for the matrix $\cA$ we obtain the decomposition:
$$ \cA=
\left(\begin{array}{cc} P_1  &0\\ 0 & I \end{array}\right)
\left(\begin{array}{ccc}  L_0  & 0 &0\\ M_0 & I  &0\\  \bC_0 d_1^{-1} &0& I \end{array}\right)
\left(\begin{array}{ccc}  d_1 &0 &0\\ 0 & 0 & \bB_1 \\0 & \bC_1&\bD \end{array} \right)
\left(\begin{array}{ccc}  U_0 & V_0 &d_1^{-1}\bB_0\\ 0 & I & 0 \\ 0 & 0 & I  \end{array}\right)
\left(\begin{array}{cc} Q_1  &0\\ 0 & I \end{array}\right). 
\eqno(26)
$$
(2.1) Let $\bB_1=0$ and $ \bC_1=0$. We can rearrange the  block $ \bD $ in the upper left corner
$$
 \left(\begin{array}{cc}   0 & \bB_1 \\ \bC_1&\bD \end{array} \right)=
 \left(\begin{array}{cc}   0 & I \\ I & 0 \end{array} \right) 
 \left(\begin{array}{cc}   \bD & 0 \\ 0 &  0 \end{array} \right)
 \left(\begin{array}{cc}  0 & I \\ I & 0 \end{array} \right). 
$$
 Let us find the exact triangular decomposition of $\bD$:
$$
 \bD  =P_2L_2D_2U_2Q_2.
$$ 
We denote   
$$
\bP_3=
\left(\begin{array}{cc} P_1  &0\\ 0 & P_2 \end{array}\right)
\left(\begin{array}{ccc}  I  & 0 &0\\ 0 & 0 &I\\  0 &I&0 \end{array}\right)
, \ 
\bQ_3=
\left(\begin{array}{ccc}  I  & 0 &0\\ 0 & 0 &I\\  0 &I&0 \end{array}\right)
\left(\begin{array}{cc} Q_1  &0\\ 0 & Q_2 \end{array}\right).
$$
Then for the matrix $\cA$ we obtain the following decomposition:
$$  \cA=
\bP_3
\left(\begin{array}{ccc}  L_0  & 0 &0\\ P_2^T \bC_0 d_1^{-1} & L_2  &0\\  M_0 &0& I \end{array}\right)
\left(\begin{array}{ccc }  d_1 & 0 &0 \\ 0 & D_2 &0 \\ 0 & 0 & 0  \end{array} \right)
\left(\begin{array}{ccc}  U_0 & d_1^{-1}\bB_0Q_2^T & V_0\\ 0 & U_2 &0 \\ 0 & 0 & I  \end{array}\right)
\bQ_3
. 
$$
It is easy to check that the decomposition is exact triangular.

(2.2) Suppose that at least one of the two blocks of $ \bB_1 $ or $ \bC_1 $ is not zero. Let the exact triangular decomposition exist for these blocks:
$$
 \bC  =P_2L_2D_2U_2Q_2, \  \bB  =P_3L_3D_3U_3Q_3.
$$
  We denote 
$$
\bP_1= \left(\begin{array}{cc} P_1  &0\\ 0 & I \end{array}\right),
\bP_2=
\left(\begin{array}{ccc}  I & 0 & 0\\ 0 & P_3 & 0\\  0 & 0 &P_2 \end{array}\right)
, \ 
\bQ_2= \left(\begin{array}{ccc}  I  & 0 &0\\ 0 & Q_2 &0\\  0 &0&Q_3 \end{array}\right), \
\bQ_1=
\left(\begin{array}{cc} Q_1  &0\\ 0 & I \end{array}\right),\
$$
$\bP_3=\bP_1\bP_2$,
$\bQ_3=\bQ_2\bQ_1$,
$\bD'=   L_2^{-1}P_2^T\bD Q_3^T U_3^{-1}$.

Then, basing on the expansion (26) we obtain for the matrix $ \cA $ the  decomposition of the form:
$$
\cA= \bP_3
\left(\begin{array}{ccc}  L_0  & 0 &0\\P_3^T M_0 & L_3  &0\\ P_2^T \bC_0 d_1^{-1} &0& L_2 \end{array}\right)
\left(\begin{array}{ccc }  d_1 & 0 &0\\ 0 & 0  &D_3 \\ 0 & D_2  &\bD' \end{array} \right)
\left(\begin{array}{ccc}  U_0 & V_0 Q_2^T &d_1^{-1}\bB_0 Q_3^T\\ 0 & U_2 & 0 \\ 0 & 0 & U_3 \end{array}\right)
\bQ_3. 
\eqno(27)
$$
We denote $ d_2 $ and $ d_3 $ nondegenerate blocks of the matrices $ D_2 $ and $ D_3 $, respectively,  
$$
(V_1,V_4)= V_0 Q_2^T,    (V_5,V_6)= d_1^{-1}\bB_0 Q_3^T ,  
 \left(\begin{array}{c} M_1 \\ M_4\end{array} \right) =  P_3^T M_0,   
 \left(\begin{array}{c} M_5\\ M_6\end{array} \right) =  P_2^T \bC_0 d_1^{-1}
$$ $$
 L_2=
\left(\begin{array}{cc} L_2'  &0\\ M_2 & I \end{array}\right),
 L_3=
\left(\begin{array}{cc} L_3'  &0\\ M_3 & I \end{array}\right),
 U_2=
\left(\begin{array}{cc} U_2'  &V_2\\ 0 & I \end{array}\right),
 U_3=
\left(\begin{array}{cc} U_3'  &V_3\\ 0 & I \end{array}\right),
 \bD'=
\left(\begin{array}{cc} \bD'_1 & \bD'_3 \\  \bD'_2  & \bD'_4\end{array} \right).
$$
$$
M_7= \bD'_2 d_3^{-1}, \ 
V_7=d_2^{-1}\bD'_1 U_3', \ 
V_8=d_2^{-1}(\bD'_1 V_3+\bD'_3).
$$
Then (27) can be written as
 $$
\cA= \bP_3
\left(\begin{array}{ccccc } L_0 & 0 & 0 & 0 & 0\\ M_1  & L_3'  & 0 & 0 & 0 
\\ M_4 & M_3 & I & 0 & 0 \\M_5& 0 & 0  & L_2' & 0 \\M_6 & 0 &0 & M_2 & I\end{array} \right)
\left(\begin{array}{ccccc }  d_1 & 0 & 0 & 0 & 0\\ 0 & 0  & 0 & d_3 & 0 
\\ 0 & 0 & 0 & 0 & 0 \\0& d_2 & 0  & \bD'_1 & \bD'_3 \\ 0 & 0 &  0 &\bD'_2&\bD'_4\end{array} \right)
\left(\begin{array}{ccccc } U_0 & V_1 & V_4 & V_5 & V_6\\ 0 & U_2'  & V_2 & 0 & 0 
\\ 0 & 0 & I & 0 & 0 \\0& 0 & 0  & U_3' & V_3 \\0 & 0 &0 & 0 & I\end{array} \right)
\bQ_3=
$$ 
$$ \bP_3
\left(\begin{array}{ccccc } L_0 & 0 & 0 & 0 & 0\\ M_1  & L_3'  & 0 & 0 & 0 
\\ M_4 & M_3 & I & 0 & 0 \\M_5& 0 & 0  & L_2' & 0 \\M_6 & M_7 &0 & M_2 & I\end{array} \right)
\left(\begin{array}{ccccc }  d_1 & 0 & 0 & 0 & 0\\ 0 & 0  & 0 & d_3 & 0 
\\ 0 & 0 & 0 & 0 & 0 \\0& d_2 & 0  & 0 & 0 \\ 0 & 0 &  0 & 0 & \bD'_4 \end{array} \right)
\left(\begin{array}{ccccc } U_0 & V_1 & V_4 & V_5 & V_6\\ 0 & U_2'  & V_2 & V_7 & V_8 
\\ 0 & 0 & I & 0 & 0 \\0& 0 & 0  & U_3' & V_3 \\0 & 0 &0 & 0 & I\end{array} \right)
\bQ_3.
\eqno(28)
$$
Find the exact triangular decomposition  $\bD'_4$:
$$
 \bD'_4  =P_4L_4D_4U_4Q_4, 
\eqno(29)
$$
Let us denote  the matrices $\bP_4=\diag(I,I,I,I,P_4)$, $\bQ_4=\diag(I,I,I,I,Q_4)$, $\bP_5=\bP_3 \bP_4$, $\bQ_5=\bQ_4 \bQ_3$, 
$(M_6', M_7', M_2') =P_4^T(M_6, M_7, M_2)$ и $(V_6', V_8', V_3')=(V_6, V_8, V_3)Q_4^T$.

After substituting (29) into (28) we obtain the decomposition of the matrix $\cA $ as
$$ \cA= \bP_5
\left(\begin{array}{ccccc } L_0 & 0 & 0 & 0 & 0\\ M_1  & L_3'  & 0 & 0 & 0 
\\ M_4 & M_3 & I & 0 & 0 \\M_5& 0 & 0  & L_2' & 0 \\M_6' & M_7' &0 & M_2' & L_4\end{array} \right)
\left(\begin{array}{ccccc }  d_1 & 0 & 0 & 0 & 0\\ 0 & 0  & 0 & d_3 & 0 
\\ 0 & 0 & 0 & 0 & 0 \\0& d_2 & 0  & 0 & 0 \\ 0 & 0 &  0 & 0 &  D_4 \end{array} \right)
\left(\begin{array}{ccccc } U_0 & V_1 & V_4 & V_5 & V_6'\\ 0 & U_2'  & V_2 & V_7 & V_8' 
\\ 0 & 0 & I & 0 & 0 \\0& 0 & 0  & U_3' & V_3' \\0 & 0 &0 & 0 & U_4\end{array} \right)
\bQ_5.
\eqno(30)
$$

We rearrange the blocks $ d_2 $, $ d_3 $ and $ D_4 $ to obtain the diagonal matrix 
${\mathbf d}=\diag(d_1,d_3,d_2,D_4,0)$. To do it we use permutation matrices $ P_6 $ and $ Q_6 $:
$$
P_6=
\left(\begin{array}{ccccc }   1 & 0 & 0 & 0 & 0\\ 0 & 0  & 0 & 1 & 0 
\\ 0 & 1 & 0 & 0 & 0 \\0 & 0 &0 & 0 & 1 \\0& 0 & 1  & 0 & 0  \end{array} \right),
Q_6=
\left(\begin{array}{ccccc } 1 & 0 & 0 & 0 & 0\\ 0 & 1  & 0 & 0 & 0 
\\ 0 & 0 & 0 & 0 & 1 \\0& 0 & 1  & 0&0 \\0& 0 &0 &1&0\end{array} \right),
P_6
\left(\begin{array}{ccccc }  d_1 & 0 & 0 & 0 & 0\\ 0 & 0  & 0 & d_3 & 0 
\\ 0 & 0 & 0 & 0 & 0 \\0& d_2 & 0  & 0 & 0 \\ 0 & 0 &  0 & 0 & D_4 \end{array} \right)
Q_6={\mathbf d}
.
$$
As a result, we  obtain the decomposition:
$$
\cA= \bP_6 \bL {\mathbf d} \bU \bQ_6, 
\eqno(31)
$$
with permutation matrices $\bP_6=\bP_5 P_6^T$ and $\bQ_6= Q_6^T \bQ_5$, 
diagonal matrix ${\mathbf d}$ and triangular matrices
$$
\bL=P_6 
\left(\begin{array}{ccccc } L_0 & 0 & 0 & 0 & 0\\ M_1  & L_3'  & 0 & 0 & 0 
\\ M_4 & M_3 & I & 0 & 0 \\M_5& 0 & 0  & L_2' & 0 \\M_6' & M_7' &0 & M_2' & L_4\end{array} \right)
P_6^T=
\left(\begin{array}{ccccc } L_0 & 0 & 0 & 0 & 0\\ M_5 & L_2'  & 0 & 0 & 0 
\\ M_1  & 0 & L_3' & 0 & 0 \\M_6'& M_7' & M_2' & L_4 & 0 \\M_4 & 0 &M_3 & 0 & I\end{array} \right)
$$
$$
\bU=Q_6^T 
\left(\begin{array}{ccccc }
 U_0 & V_1 & V_4 & V_5 & V_6'
\\ 0 & U_2'  & V_2 & V_7 & V_8' 
\\ 0 & 0 & I & 0 & 0 
\\0& 0 & 0  & U_3' & V_3' 
\\0 & 0 &0 & 0 & U_4\end{array} \right)
Q_6=
\left(\begin{array}{ccccc }
 U_0 & V_1& V_5 & V_6'& V_4  
\\ 0 & U_2'& V_7 & V_8' & V_2    
\\0& 0 &  U_3' & V_3' &0  
\\0 & 0 & 0 & U_4 & 0
\\ 0 & 0 & 0 & 0 & I\end{array} \right)
$$
We show that the expansion (31) is an exact triangular decomposition. To do this, we must verify that the matrices $ {\cal L}=\bP_6 \bL \bP_6^{T}$
and ${\cal Q}=\bQ_6^{T} \bU \bQ_6$ are triangular, and the matrices  $\bP, \bL,\bU, \bQ$ satisfy the properties ($ \alpha $) and ($ \beta $).

It is easy to see that all matrices in sequence
$${\cal L}_1 =  P_6 \bL P_6^{T}, 
{\cal L}_2=\bP_4{\cal L}_1 \bP_4^T, 
{\cal L}_3=\bP_2{\cal L}_2\bP_2^T,
{\cal L}_4=\bP_1{\cal L}_3\bP_1^T
\eqno(32)
$$
are triangular and
${\cal L}_4= {\cal L}$. 

Similarly, all of the matrices in the sequence
$${\cal U}_1 =  Q_6^T \bL Q_6, 
 {\cal U}_2=\bQ_4^T{\cal U}_1 \bQ_4, 
 {\cal U}_3=\bQ_2^T{\cal U}_2\bQ_2,
 {\cal U}_4=\bQ_1^T{\cal U}_3\bQ_1 
\eqno(33)
$$ 
are triangular and 
${\cal U}_4= {\cal U}.$ 

For the matrices $ \bL $ and $ \bU $ Property 1 ($ \alpha $) is satisfied.
To verify the properties ($ \beta $),
the  unit block in the lower right corner of the matrix $ \bL $ and $ \bU $ should be replaced by an arbitrary triangular block,  respectively, the lower triangle for $ \bL $ and the upper triangular for $ \bU $. We check that all the matrices in (32) and (33) will be still triangular. This is based on the fact that the exact triangular decompositions for matrices $ \bA, \bB, \bC, \bD '$ have the property ($\beta $).
 
\section{Conclusion}

Algorithms for finding the LDU and Bruhat decomposition
in commutative domain are described. These algorithms have the same
complexity as matrix multiplication.

\section{Example}
  $ \left[\begin{array}{cccc}1 & -4&0&1\\ 4& 5&5&3\\ 1&2&2&2\\ 3&0&0&1\\ \end{array}\right]$
  $  = $
$  \begin{array}{c}
\left[\begin{array}{cccc}-24& 0&  12& 1\\
 0&   60& 15& 4\\
 0&   0&  6&  1\\
 0&   0&  0&  3\\
\end{array}\right]
\left[\begin{array}{cccc}0&     0&      1/(-144)& 0          \\
 0&     0&      0&          1/(-1440)\\
 0&     1/18& 0&          0          \\
 1/3& 0&      0&          0          \\
\end{array}\right]
\left[\begin{array}{cccc}3& 0& 0&   1  \\
 0& 6& 6&   5  \\
 0& 0& -24& -16\\
 0& 0& 0&   60 \\
\end{array}\right]
 \end{array} $



\end{document}